\documentclass[fleqn]{article}
\usepackage{multicol}
\usepackage{graphicx} 
\usepackage{subfig}
\usepackage{geometry}
    \geometry{margin=2cm}
\usepackage[utf8]{inputenc}
\usepackage[T1]{fontenc}
\usepackage{setspace}
\usepackage{amsmath}
\usepackage{bbm}
\usepackage{float}
\usepackage{microtype}
\usepackage{amssymb}
\usepackage{amsthm}
\newtheorem{mydef}{Definition}
\usepackage{tikz}
\usepackage{mathtools}
\usepackage{enumerate}
\usepackage{authblk}
\usepackage{hyperref}
\usepackage{xcolor}
\usepackage[
    backend=biber,
    style=phys]{biblatex} %Imports biblatex package
\addbibresource{references.bib} %Import the bibliography file

\newcommand{\denop}{\mathcal{D}}
\newcommand{\hs}{\mathcal{H}}

\newcommand{\id}{\mathbbm{1}}

\newcommand{\V}{\mathcal{V}}
\newcommand{\RV}{\mathcal{R_V}}
\DeclareMathOperator{\Tr}{Tr}

\newtheorem{lemma}{Lemma}
\newtheorem{prop}{Proposition}

\title{Local Invariance of Divergence-based Quantum Information Measures}
\author[1]{Christopher Popp}
\author[2]{Tobias C. Sutter}
\author[3]{Beatrix C. Hiesmayr}
\affil[1,2,3]{University of Vienna, Faculty of Physics, Währingerstrasse 17, 1090 Vienna.\vspace{3.5mm}}
\affil[1]{christopher.popp@univie.ac.at}
\affil[2]{tobias.christoph.sutter@univie.ac.at}
\affil[3]{beatrix.hiesmayr@univie.ac.at}
\date{}
\begin{document}

\onehalfspacing
\maketitle

\begin{abstract}
Quantum information quantities, such as mutual information and entropies, are essential for characterizing quantum systems and protocols in quantum information science. In this contribution, we identify types of information measures based on generalized divergences and prove their invariance under local isometric or unitary transformations. Leveraging the reversal channel for local isometries together with the data processing inequality, we establish invariance for information quantities used in both asymptotic and one-shot regimes without relying on the specific functional form of the underlying divergence. 
These invariances can be applied to improve the computation of such information quantities or optimize protocols and their output states whose performance is determined by some invariant measure. Our results improve the capability to characterize and compute many operationally relevant information measures with application across the field of quantum information processing.

\end{abstract}

\section{Introduction}
A central goal of quantum information theory is the precise quantification of correlations, uncertainties, and distinguishability within quantum systems. Various information-theoretic quantities have been defined to capture the fundamental limits of quantum information processing tasks like communication, computation, or entanglement manipulation.
\\
Traditionally, the fundamental quantity is the von Neumann entropy \cite{neumann_thermodynamik_1927}, from which many quantities relevant in classical information science, like the mutual information or the conditional entropy, can be generalized to the quantum regime (cf. \cite{wilde_quantum_2013}). These measures are relevant due to their operational meaning in quantifying optimal rates of achieving various information processing tasks in the asymptotic regime of infinitely many independent uses of a resource (e.g., quantum state). A notable example is the quantum relative entropy \cite{umegaki_conditional_1962}, which can be related to quantum hypothesis testing \cite{hiai_proper_1991}. Several other information quantities can be expressed in terms of the relative entropy, such as the private information for the rate of secret-key distillation and the coherent information for the rate of entanglement distillation \cite{devetak_distillation_2005}. 
In the so-called one-shot setting (cf. Ref.~\cite{tomamichel_quantum_2016}), only a finite amount of resources are considered. In this approach, the goal is to find optimal rates to transform the limited resources, such as quantum states, into the target states while allowing fixed error bounds. Moreover, the one-shot setting is crucial for security analysis in the context of quantum cryptography without any assumptions on the actions of a malicious third party \cite{renner_security_2006}. As in the asymptotic setting, several information quantities with operational meaning have been identified such as the $\varepsilon$-hypothesis testing mutual information \cite{wang_one-shot_2012} and the smooth max-mutual information \cite{buscemi_quantum_2010} that can be related to secret-key distillation in the one-shot setting.
\\
As quantum technologies advance, the landscape of information quantities and related processing tasks becomes increasingly diverse (see Ref.~\cite{khatri_principles_2024} for a comprehensive overview of information measures and related processing tasks) and a unified framework capable of accommodating both asymptotic and one-shot scenarios is essential. Generalized divergences offer such a unifying language. By definition, each divergence satisfies monotonicity under completely positive trace-preserving (CPTP) maps, i.e., a data processing inequality under the action of quantum channels, that guarantee operational meaning and allow deriving an entire spectrum of information measures. Examples include the Petz–Rényi relative entropy \cite{petz_quasi-entropies_1985, petz_quasi-entropies_1986, tomamichel_fully_2009}, the sandwiched Rényi relative entropy \cite{muller-lennert_quantum_2013, wilde_strong_2014},  the geometric  R\'{e}nyi relative entropy \cite{petz_contraction_1998, matsumoto_new_2018} and (smooth) max/min-relative entropies \cite{renner_security_2006}. Note that the relative entropy also satisfies a data processing inequality and thus many information quantities in the asymptotic setting can be expressed by this specific divergence. Replacing the relative entropy by other divergences in similar expressions allows one to define generalized quantities that are applicable in other regimes, e.g., the one-shot setting.
\\
Despite their broad applicability, evaluating these divergence–based measures poses significant computational problems in the case of high-dimensional quantum states or if their definition involves challenging or infeasible optimizations. In this work, we define several types of divergence-based information measures and prove their invariance under local isometric or unitary transformations.
\\
In Sec.~\ref{sec:localInvs} we introduce the setting and notation and define the types of generalized divergence-based information quantities. We continue to present and prove the main result of this contribution, namely the invariance of information quantities of the defined types under local isometric or unitary transformations. Finally, we conclude our results in Sec.~\ref{sec:conclusion} and provide an outlook to potential applications and future research directions. 
\section{Local invariance of information quantities} \label{sec:localInvs}
In this section we briefly introduce the notation and necessary objects to define several types of generalized information quantities and prove their invariance under local isometric or unitary transformations.
\subsection{Notation and setting}
\label{sec:setting}
We consider two parties $A$ and $B$ with corresponding Hilbert space $\hs = \hs_A \otimes \hs_B$. $L_{+}(\hs)$ and $\denop(\hs)$ denote the spaces of positive-semidefinite operators and density operators (i.e. quantum states) acting on $\hs$, respectively. 
Isometries are denoted by capital $V$. Corresponding channels, i.e., completely positive and trace-preserving maps, are written as $\V(\cdot)  \equiv V~(\cdot)~ V^\dagger$. Unitary operators are named by capital $U$. The identity map is written as $\mathrm{id}$ and $\circ$ denotes the composition of maps.
States or operators are labeled by the systems they act on. For any multipartite state or operator, e.g., the state $\rho_{AB}$, the marginal state or operator is given by $\rho_A := \Tr_B[\rho_{AB}]$. 
\subsection{Generalized divergence-based types of information quantities}
 The information quantities we analyze in this contribution are defined via so-called generalized divergences $\mathbf{D}$ \cite{polyanskiy_arimoto_2010, sharma_fundamental_2013}.
\begin{mydef}[Generalized divergence, data-processing inequality] \label{def:gedDiv} \ \\
    For quantum states $\rho \in \denop(\hs)$ and positive-semidefinite operators $\zeta \in L_+(\hs)$, a function $\mathbf{D}:\denop(\hs) \times L_{+}(\hs) \rightarrow \mathbbm{R} \cup \infty$ that satisfies the data-processing inequality under any channel $\mathcal{N}$, $\mathbf{D}(\rho||\zeta) \geq \mathbf{D}(\mathcal{N}(\rho)||\mathcal{N}(\zeta))$,
    is called (generalized) divergence.
\end{mydef} \noindent
For quantum states, the data-processing inequality can be interpreted as the property that no physical transformation can make two states more distinguishable.
It is known that any divergence is invariant under the application of isometric transformations $\V$ \cite{khatri_principles_2024}:
\begin{flalign}
    \label{eq:gendivIsoInv}
     \mathbf{D}(\V(\rho)||\V(\zeta)) = \mathbf{D}(\rho||\zeta).
\end{flalign}
Generalized divergences are used to define generalized information quantities for bipartite quantum states $\rho_{AB}$. There exist several ways to define information quantities based on divergences with operational meaning. In this work, we consider quantities that relate a bipartite state and its local marginals.
Using specific instances for the divergences, these quantities can be related to various information processing tasks (cf. \cite{khatri_principles_2024} for a comprehensive overview). The method to prove local invariance of the quantities we present in this work, however, does not depend on the specific form of the divergence, but only on how it is applied to the state. We therefore define several types of information quantities independent of the specific form of the divergence.
Some types involve a so-called smoothing, i.e., an optimization over an environment of the quantum state. For this, the sine distance \cite{rastegin_sine_2006} is used as a distance measure that is closely related to the fidelity \cite{uhlmann_transition_1976} as defined below. \\
In the following, let $\varepsilon \in [0,1]$ and let $\rho$ and $\sigma$ be states.
\begin{mydef}[Fidelity, sine distance]\label{def:fidelitySineDistance} \ \\
 The fidelity of two quantum states $\rho$ and $\sigma$ is defined as follows:
 \begin{flalign}
     \mathcal{F}(\rho, \sigma) := \left(\Tr[\sqrt{\sqrt{\sigma} \rho \sqrt{\sigma}}]\right)^2
 \end{flalign}
 The sine distance is defined as:
 \begin{flalign}
     P(\rho,\sigma) := \sqrt{1-\mathcal{F}(\rho,\sigma)}     
 \end{flalign}
\end{mydef} \noindent
However, other distinguishability functions potentially with different properties can be used as well. In this work, we only require that the function obeys the data-processing inequality and invariance under isometric transformations. As both properties are satisfied by any divergence (cf. Definition \ref{def:gedDiv} and \eqref{eq:gendivIsoInv}), including the sine distance~\cite{khatri_principles_2024}, the smoothing environment $B^{\varepsilon}(\rho)$ around a state $\rho$ is defined as follows:
\begin{mydef}[Smoothing environment] \label{def:smoothinBall} \ \\ 
Let $\mathbf{D}$ be any divergence.
\begin{flalign}
    \label{eq:smoothingBall}
    B^\varepsilon(\rho) := \left\{ \hat{\rho} : \mathbf{D}(\rho||\hat{\rho}) \leq \varepsilon \right\} 
\end{flalign}
\end{mydef} \noindent
For the sine distance, it specifically reads:
\begin{flalign}
        \label{eq:smoothBallSine}
        B_P^\varepsilon(\rho) := \left\{ \hat{\rho} : P(\rho,\hat{\rho}) \leq \varepsilon \right\} 
        = \left\{ \hat{\rho} : \mathcal{F}(\rho, \hat{\rho}) \geq 1-\varepsilon^2 \right\}.
    \end{flalign}
 \noindent
In this work, we define and analyze the following types of information quantities:
\begin{mydef}[Types of generalized mutual information] \label{def:typesMutInf}
\begin{flalign}  
    \mathbf{I}_1(\rho_{AB}) &:= \mathbf{D}(\rho_{AB}||\rho_A \otimes \rho_B) \\
    \mathbf{I}_2(\rho_{AB}) &:= \inf_{\sigma_B} \mathbf{D}(\rho_{AB}||\rho_A \otimes \sigma_B) \\
    \mathbf{I}^{\varepsilon}_3(\rho_{AB}) &:= \inf_{\substack{\hat{\rho}_{AB} \in B^{\varepsilon}(\rho_{AB}) \\ \sigma_B}}\mathbf{D}(\hat{\rho}_{AB}||\rho_A \otimes \sigma_B) \\
    \mathbf{I}^{\varepsilon}_4(\rho_{AB}) &:= \inf_{\hat{\rho}_{AB} \in B^{\varepsilon}(\rho_{AB})}\mathbf{D}(\hat{\rho}_{AB}||\rho_A \otimes \hat{\rho}_B)
    \end{flalign}
\end{mydef}
\begin{mydef}[Types of generalized conditional entropies] \label{def:typesCondEnt}
\begin{flalign}
    \mathbf{H}_1(\rho_{AB}) &:= -\mathbf{D}(\rho_{AB}||\id_A \otimes \rho_B) \\
    \mathbf{H}_2(\rho_{AB}) &:= -\inf_{\sigma_B} \mathbf{D}(\rho_{AB}||\id_A \otimes \sigma_B) \\
    \mathbf{H}^{\varepsilon}_3(\rho_{AB}) &:= -\inf_{\substack{\hat{\rho}_{AB} \in B^{\varepsilon}(\rho_{AB}) \\ \sigma_B}} \mathbf{D}(\hat{\rho}_{AB}||\id_A \otimes \sigma_B) 
\end{flalign}    
\end{mydef} \noindent
The defined types include information quantities in both the asymptotic and the one-shot setting. Examples are the (generalized) quantum mutual information ($\mathbf{I}_1, \mathbf{I}_2$), the coherent information ($\mathbf{H}_2$), the Petz-R\'{e}nyi mutual information, the sandwiched R\'{e}nyi mutual information, the geometric R\'{e}nyi mutual information, the smooth min-mutual information (see Ref.~\cite{datta_min-_2009} for these quantities), the $\varepsilon$-hypothesis testing mutual information ($\mathbf{I}^{\varepsilon}_3$), the smooth max-conditional and min-conditional entropy ($\mathbf{H}^{\varepsilon}_3$) and the smooth max-mutual information ($\mathbf{I}^{\varepsilon}_4$). 
We show that the types $\mathbf{I}_1, \mathbf{I}_2$ and $\mathbf{I}^{\varepsilon}_3$ are invariant under any local isometric transformation, while the types $\mathbf{I}^{\varepsilon}_4, \mathbf{H}_1, \mathbf{H}_2$ and $\mathbf{H}^{\varepsilon}_3$ are invariant if the first subsystem is transformed unitarily, while the second subsystem can be transformed by any isometry. More precisely, we prove below:
\begin{prop}
\label{thm:typesLocIsoInv} \ \\
Any information quantity of type $\mathbf{I}_1, \mathbf{I}_2$ and $\mathbf{I}^{\varepsilon}_3$ as in Definition~\ref{def:typesMutInf} is invariant under local isometric transformations of the form 
$\V_{AB}(\cdot) = V_A \otimes V_B ~(\cdot)~V_A^\dagger \otimes V_B^\dagger$.
\end{prop}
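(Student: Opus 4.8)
The plan is to build everything on the two facts already in hand: the isometric invariance of any divergence, Eq.~\eqref{eq:gendivIsoInv}, and the data-processing inequality. The first step is to record how the marginals transform. Writing $\rho'_{AB} = \V_{AB}(\rho_{AB})$ with $\V_{AB}(\cdot) = V_A \otimes V_B~(\cdot)~V_A^\dagger \otimes V_B^\dagger$, a short matrix-element computation using $V_B^\dagger V_B = \id$ gives $\Tr_B[\rho'_{AB}] = V_A \rho_A V_A^\dagger = \V_A(\rho_A)$, and symmetrically $\rho'_B = \V_B(\rho_B)$, so that the product of marginals factorizes through the isometry, $\rho'_A \otimes \rho'_B = \V_{AB}(\rho_A \otimes \rho_B)$. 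For $\mathbf{I}_1$ this settles the claim at once: $\mathbf{I}_1(\rho'_{AB}) = \mathbf{D}(\V_{AB}(\rho_{AB})||\V_{AB}(\rho_A \otimes \rho_B)) = \mathbf{D}(\rho_{AB}||\rho_A \otimes \rho_B) = \mathbf{I}_1(\rho_{AB})$ by Eq.~\eqref{eq:gendivIsoInv}.

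For $\mathbf{I}_2$ and $\mathbf{I}^{\varepsilon}_3$ the optimized state $\sigma_B$ (and, for $\mathbf{I}^{\varepsilon}_3$, the smoothed state $\hat\rho_{AB}$) ranges over the full target space, which is strictly larger than the image of the isometry; this mismatch is the crux of the argument, handled by a two-sided bound. For $\mathbf{I}_2(\rho'_{AB}) \leq \mathbf{I}_2(\rho_{AB})$ I would take an arbitrary feasible $\sigma_B$ on the source, rewrite $\mathbf{D}(\rho_{AB}||\rho_A \otimes \sigma_B) = \mathbf{D}(\rho'_{AB}||\rho'_A \otimes \V_B(\sigma_B))$ via Eq.~\eqref{eq:gendivIsoInv}, and note that $\V_B(\sigma_B)$ is a legitimate candidate in the transformed infimum. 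For the reverse inequality I would introduce a reversal channel: for the isometry $V_B$ with range projector $\Pi_B = V_B V_B^\dagger$, the CPTP map $\mathcal{R}_B(X) = V_B^\dagger X V_B + \Tr[(\id - \Pi_B)X]\,\omega_B$, with $\omega_B$ any fixed state, satisfies $\mathcal{R}_B \circ \V_B = \mathrm{id}$, and likewise $\mathcal{R}_A$. Applying data processing under $\mathcal{R}_{AB} = \mathcal{R}_A \otimes \mathcal{R}_B$ to an arbitrary target $\tau_B$ yields $\mathbf{D}(\rho'_{AB}||\rho'_A \otimes \tau_B) \geq \mathbf{D}(\rho_{AB}||\rho_A \otimes \mathcal{R}_B(\tau_B))$, and since $\mathcal{R}_B(\tau_B)$ is again a state on the source, the right-hand side is bounded below by $\mathbf{I}_2(\rho_{AB})$. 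The two bounds combine to equality.

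The extension to $\mathbf{I}^{\varepsilon}_3$ requires controlling the smoothing ball under both $\V_{AB}$ and $\mathcal{R}_{AB}$. Here I would invoke that the fidelity---hence the sine distance $P$---is invariant under isometries and monotone under any CPTP map. Invariance of $P$ shows $\V_{AB}$ maps $B^{\varepsilon}(\rho_{AB})$ into $B^{\varepsilon}(\rho'_{AB})$, supplying feasible smoothed states for the $\leq$ direction; monotonicity of $\F$ under $\mathcal{R}_{AB}$, together with $\mathcal{R}_{AB}(\rho'_{AB}) = \rho_{AB}$, shows $\mathcal{R}_{AB}$ maps $B^{\varepsilon}(\rho'_{AB})$ into $B^{\varepsilon}(\rho_{AB})$, supplying feasible pullbacks for the $\geq$ direction. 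With these two inclusions the same two-sided argument as for $\mathbf{I}_2$---now optimizing jointly over the smoothed state and $\sigma_B$---closes the proof. I expect the main obstacle to be precisely the target-space mismatch in $\mathbf{I}_2$ and $\mathbf{I}^{\varepsilon}_3$: the reversal channel is what upgrades the one-way isometric invariance into a genuine two-sided equality, and the technical heart is verifying that $\mathcal{R}_{AB}$ reverses the local isometry while preserving both the product structure $\rho_A \otimes (\cdot)$ and the smoothing ball.
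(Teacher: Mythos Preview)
Your proposal is correct and follows the same overall architecture as the paper: isometric invariance of $\mathbf{D}$ supplies one inequality, and the data-processing inequality applied through a reversal channel supplies the other, with the smoothing-ball inclusions handled via invariance and monotonicity of the sine distance exactly as in the paper's Lemma~\ref{thm:revChannelLemma1}.

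The one noteworthy difference is your choice of reversal channel. You take the tensor product $\mathcal{R}_{AB} = \mathcal{R}_A \otimes \mathcal{R}_B$ of local reversals, so that $\mathcal{R}_{AB}(\V_A(\rho_A)\otimes\tau_B) = \rho_A \otimes \mathcal{R}_B(\tau_B)$ is immediate from the product structure. The paper instead uses a single global reversal $\RV$ for $V_A\otimes V_B$ with a carefully chosen reference state $\omega = \rho_A \otimes \omega_B$, and then needs a separate lemma (Lemma~\ref{thm:revChannelLemma2}) to verify that this $\RV$ preserves the $\rho_A \otimes (\cdot)$ form. Your construction bypasses that lemma entirely and is slightly cleaner; the paper's formulation, on the other hand, is set up to parallel the later lemmas needed for Proposition~\ref{thm:typesLocUniIsoInv}, where the first argument is $\id_A$ or the smoothed marginal and a uniform treatment via a single $\RV$ is convenient. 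Both routes are valid and yield the same bounds.
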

\begin{prop}
\label{thm:typesLocUniIsoInv} \ \\
Any information quantity of type $\mathbf{I}^{\varepsilon}_4, \mathbf{H}_1, \mathbf{H}_2$ and $\mathbf{H}^{\varepsilon}_3$ as in Def.~\ref{def:typesMutInf} and Def.~\ref{def:typesCondEnt} is invariant under local unitary transformations in the first system and isometric transformations in the second system of the form 
$\V_{AB}(\cdot) = U_A \otimes V_B ~(\cdot)~U_A^\dagger \otimes V_B^\dagger$.
\end{prop}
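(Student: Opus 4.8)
The plan is to prove Proposition~\ref{thm:typesLocUniIsoInv} by reducing every quantity to an application of the divergence invariance under isometries, Eq.~\eqref{eq:gendivIsoInv}, combined with the data-processing inequality to handle the infimum and smoothing terms. The central structural fact I would establish first is how the local transformation $\V_{AB} = \mathcal{U}_A \otimes \V_B$ acts on the second argument of each divergence. Since $U_A$ is unitary, $U_A \id_A U_A^\dagger = \id_A$, and hence $\V_{AB}(\id_A \otimes \sigma_B) = \id_A \otimes V_B \sigma_B V_B^\dagger = \id_A \otimes \mathcal{V}_B(\sigma_B)$. This is precisely why unitarity (not mere isometry) is required on the $A$-system: an isometry would map $\id_A$ to a projector $V_A V_A^\dagger \neq \id_A$, breaking the identity structure that the conditional-entropy types $\mathbf{H}_1,\mathbf{H}_2,\mathbf{H}^\varepsilon_3$ depend on.

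For the unsmoothed types $\mathbf{H}_1$ and $\mathbf{H}_2$, the argument is then immediate. For $\mathbf{H}_1$, I would write
\begin{flalign}
\mathbf{H}_1(\V_{AB}(\rho_{AB})) = -\mathbf{D}(\V_{AB}(\rho_{AB})\,||\,\id_A \otimes \mathcal{V}_B(\rho_B)) = -\mathbf{D}(\V_{AB}(\rho_{AB})\,||\,\V_{AB}(\id_A \otimes \rho_B)) = -\mathbf{D}(\rho_{AB}\,||\,\id_A \otimes \rho_B), \nonumber
\end{flalign}
using first that the marginal transforms as $\mathcal{V}_B(\rho_B) = \Tr_A[\V_{AB}(\rho_{AB})]$ (which follows from $U_A$ being trace-preserving on $A$), then the identity-structure fact above, and finally Eq.~\eqref{eq:gendivIsoInv} since $\V_{AB}$ is itself an isometry channel. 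For $\mathbf{H}_2$, the infimum over $\sigma_B$ is handled by observing that $\sigma_B \mapsto \mathcal{V}_B(\sigma_B)$ is a bijection from states on $\hs_B$ onto states supported on the range of $V_B$; I would show the infimum is unchanged by combining monotonicity under the channel and the reversal channel (partial isometry recovery) for $V_B$, so that restricting to the range of $V_B$ costs nothing. The mutual-information type $\mathbf{I}^\varepsilon_4$ follows the same template as the $\mathbf{H}$ cases but with $\rho_A \otimes \hat\rho_B$ in place of $\id_A \otimes \sigma_B$, additionally requiring that the smoothing ball is preserved, which I address next.

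The smoothing types $\mathbf{H}^\varepsilon_3$ and $\mathbf{I}^\varepsilon_4$ require the extra step that the local channel maps the smoothing ball $B^\varepsilon(\rho_{AB})$ bijectively onto $B^\varepsilon(\V_{AB}(\rho_{AB}))$. This rests on the invariance of the fidelity, and hence the sine distance, under isometries: $\mathcal{F}(\V_{AB}(\hat\rho_{AB}),\V_{AB}(\rho_{AB})) = \mathcal{F}(\hat\rho_{AB},\rho_{AB})$, which I would derive either directly from the definition in Def.~\ref{def:fidelitySineDistance} using $V^\dagger V = \id$, or as a special case of the divergence framework. Given this, $\hat\rho_{AB} \in B^\varepsilon(\rho_{AB})$ if and only if $\V_{AB}(\hat\rho_{AB}) \in B^\varepsilon(\V_{AB}(\rho_{AB}))$ restricted to the range, and the reversal channel again ensures no loss when enlarging to the full ball. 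The infimum over the ball is therefore taken over a set that the transformation permutes, leaving the optimal value invariant.

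The main obstacle I anticipate is the surjectivity direction of the bijection arguments for the infima over $\sigma_B$ and over $B^\varepsilon(\rho_{AB})$: because $V_B$ is a genuine isometry rather than a unitary, its image is a proper subspace, so the transformed optimization only ranges a priori over states (and smoothed states) supported on $\mathrm{range}(V_B)$. I would close this gap using the reversal (recovery) channel $\mathcal{R}_{V_B}$ associated with the isometry $V_B$, which satisfies $\mathcal{R}_{V_B} \circ \mathcal{V}_B = \mathrm{id}$ on the full $B$-system; applying data processing in both directions forces equality rather than mere inequality between the original infimum and the transformed one, and confirms that optimizing outside $\mathrm{range}(V_B)$ cannot improve the value. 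This recovery-channel technique is the one flagged in the abstract and is the key device that upgrades the data-processing inequalities into the exact invariances claimed.
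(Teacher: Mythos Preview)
Your proposal is correct and follows essentially the same route as the paper: one inequality comes from isometric invariance of $\mathbf{D}$ together with the inclusion $B^\varepsilon_{\V}(\rho)\subset B^\varepsilon(\V(\rho))$, and the reverse inequality comes from data processing under a reversal channel together with $B^\varepsilon(\V(\rho))\subset B^\varepsilon_{\RV}(\rho)$, exploiting unitarity of $U_A$ so that the $\id_A$ (resp.\ $\rho_A$) tensor factor and the coupling $\hat\rho_B=\Tr_A\hat\rho_{AB}$ survive the reversal. The one cosmetic difference is that you implicitly use the \emph{product} reversal $\mathcal{U}_A^{-1}\otimes\mathcal{R}_{V_B}$, which makes the analogues of the paper's Lemmas~\ref{thm:revChannelLemma22} and~\ref{thm:revChannelLemma3} immediate, whereas the paper works with a single global reversal channel and verifies those identities by hand; your choice is slightly cleaner but the substance is identical.
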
 \noindent
The proofs for Proposition \ref{thm:typesLocIsoInv} and \ref{thm:typesLocUniIsoInv} use the technical Lemmas \ref{thm:revChannelLemma1}, \ref{thm:revChannelLemma2} and \ref{thm:revChannelLemma3} related to the isometric transformations and the corresponding so-called reversal channel \cite{wilde_quantum_2013}, defined in the following. Due to $VV^\dagger \neq \mathbbm{1}$ in general, the map $\V^\dagger$ is not necessarily trace-preserving, although it is completely positive. The reversal channel $\RV$ corresponding to an isometric channel $\V$ is a completely positive and trace-preserving map satisfying $\RV \circ \V = \mathrm{id}$, hence reversing the action of an isometric channel.
\begin{mydef}[Reversal channel]\ \\
Let $\V:L_+(\hs) \rightarrow L_+(\tilde{\hs})$ be an isometric channel and $\omega \in \denop(\hs)$. The reversal channel corresponding to $\V$ and $\omega$ is defined as
\begin{flalign}
    \RV(\sigma):= \V^{\dagger}(\sigma) + \Tr[(\mathbbm{1} -VV^\dagger)\sigma]\omega~,
\end{flalign}
for any $\sigma \in L_+(\tilde{\hs})$.
\end{mydef} \noindent
Note, that this channel is not unique and $\V(\RV(\rho)) \neq \rho$, in general. Also note that while $\sigma \in L_+(\tilde{\hs})$, $\omega \in \denop(\hs)$ needs to be a quantum state such that the map is trace-preserving. Given an isometric channel  $\V$, we define the following sets that relate a smoothing environment to its image and pre-image under the isometric channel and its reversal channel:
\begin{mydef}
\label{def:isorevsets}
\ \\
For a state $\rho$ with $\varepsilon$-environment $B^\varepsilon(\rho)$ as in Definition \ref{def:smoothinBall}, an isometric channel $\V$, and the reversal channel $\RV$, we define:
\begin{flalign}
    B^{\varepsilon}_{\V}(\rho) &:= \lbrace \V(\hat{\rho}) ~|~ \hat{\rho} \in B^{\varepsilon}(\rho) \rbrace, \\
    B^{\varepsilon}_{\RV}(\rho) &:= \lbrace \tilde{\rho} ~|~ \RV(\tilde{\rho}) \in B^\varepsilon (\rho) \rbrace .
\end{flalign}
\end{mydef} 
\noindent We now show that these sets can be ordered (see Figure \ref{fig:smoothing}), implying inequalities for smoothed information quantities as used in the proofs of Propositions \ref{thm:typesLocIsoInv} and \ref{thm:typesLocUniIsoInv}.
\begin{figure}[h]
    \centering
    \vspace{-1em}
    \includegraphics[width=0.8\linewidth]{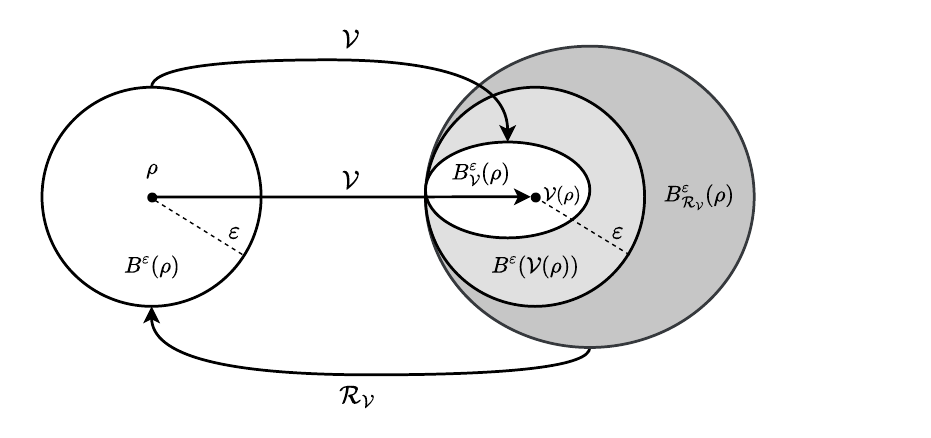}
    \caption{Schematic visualization of the smoothing environment $B^\varepsilon$ and its image $B_{\V}^\varepsilon$ and pre-image $B_{\RV}^\varepsilon$ under the isometric channel $\V$ and its reversal channel $\RV$.}
    \label{fig:smoothing}
\end{figure}
\begin{lemma}
\label{thm:revChannelLemma1}
\ \\
\noindent Regarding the sets from Definitions \ref{def:smoothinBall} and \ref{def:isorevsets}, the following set relations hold:
\begin{flalign}
     B^{\varepsilon}_{\V}(\rho) \subset B^{\varepsilon}(\V(\rho)) \subset B^{\varepsilon}_{\RV}(\rho).
\end{flalign}
\end{lemma}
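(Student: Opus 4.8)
The plan is to prove the two inclusions separately, in each case converting membership in a smoothing environment into a bound on the sine distance $P$ (equivalently, a lower bound on the fidelity $\F$) and then exploiting how $\F$ behaves under the isometric channel $\V$ and under its reversal channel $\RV$. The only two analytic ingredients I need are: (i) invariance of the fidelity under isometric channels, $\F(\V(\rho),\V(\sigma)) = \F(\rho,\sigma)$; and (ii) the data-processing inequality for the fidelity under any channel $\mathcal{N}$, $\F(\mathcal{N}(\rho),\mathcal{N}(\sigma)) \geq \F(\rho,\sigma)$. Recast through $P(\rho,\sigma) = \sqrt{1-\F(\rho,\sigma)}$, these say that the sine distance is isometry-invariant, $P(\V(\rho),\V(\sigma)) = P(\rho,\sigma)$, and contractive under channels, $P(\mathcal{N}(\rho),\mathcal{N}(\sigma)) \leq P(\rho,\sigma)$. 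Since $P$ is monotone in $\F$, all manipulations can be carried out directly at the level of $P$, which keeps the $\varepsilon$-thresholds in Definition~\ref{def:smoothinBall} transparent.

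For the first inclusion $B^{\varepsilon}_{\V}(\rho) \subset B^{\varepsilon}(\V(\rho))$, I would take an arbitrary element, which by Definition~\ref{def:isorevsets} has the form $\V(\hat{\rho})$ with $\hat{\rho} \in B^{\varepsilon}(\rho)$, i.e. $P(\rho,\hat{\rho}) \leq \varepsilon$. Applying isometry invariance of the sine distance yields $P(\V(\rho),\V(\hat{\rho})) = P(\rho,\hat{\rho}) \leq \varepsilon$, which is exactly the condition $\V(\hat{\rho}) \in B^{\varepsilon}(\V(\rho))$. I would note that this inclusion is generally proper: $B^{\varepsilon}(\V(\rho))$ also contains states with support partly outside the image of $V$, which cannot be written as $\V(\hat{\rho})$, so equality fails in general.

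For the second inclusion $B^{\varepsilon}(\V(\rho)) \subset B^{\varepsilon}_{\RV}(\rho)$, I would take $\tilde{\rho} \in B^{\varepsilon}(\V(\rho))$, i.e. $P(\V(\rho),\tilde{\rho}) \leq \varepsilon$, and push both arguments through the reversal channel $\RV$. Combining contractivity of $P$ under the channel $\RV$ with the defining property $\RV \circ \V = \mathrm{id}$ gives
\begin{flalign}
    P(\rho, \RV(\tilde{\rho})) = P(\RV(\V(\rho)), \RV(\tilde{\rho})) \leq P(\V(\rho),\tilde{\rho}) \leq \varepsilon .
\end{flalign}
Hence $\RV(\tilde{\rho}) \in B^{\varepsilon}(\rho)$, which by Definition~\ref{def:isorevsets} means precisely $\tilde{\rho} \in B^{\varepsilon}_{\RV}(\rho)$, completing the second inclusion.

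Conceptually the argument is short, so I do not expect a genuine obstacle; the main point of care is to invoke the correct monotonicity property. The data-processing inequality stated in the excerpt is for generalized divergences $\mathbf{D}$, whereas here I rely on the analogous monotonicity of the fidelity (equivalently, contractivity of the sine distance) under CPTP maps, a separate standard fact that I would cite explicitly. Two consistency checks guard the proof: first, that $\RV$ is genuinely a channel, which holds because $\omega \in \denop(\hs)$ is a state so that $\RV$ is trace-preserving and completely positive, legitimizing the contractivity step; and second, that the isometry invariance used in the first inclusion is compatible with the direction of data processing, which it is, since $\RV$ reverses $\V$ and forces equality in the fidelity DPI for isometric channels.
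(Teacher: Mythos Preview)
Your proof is correct and follows essentially the same route as the paper: isometry invariance of the sine distance for the first inclusion, and contractivity of the sine distance under the reversal channel together with $\RV \circ \V = \mathrm{id}$ for the second. Your additional remarks on strictness of the inclusion and on why $\RV$ is a bona fide channel are accurate but go slightly beyond what the paper records.
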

\begin{proof}
    For the first relation let $\tilde{\rho} \in B^\varepsilon_{\V}(\rho),~\tilde{\rho} = \V(\hat{\rho}),~ \hat{\rho} \in B^\varepsilon(\rho)$. Using the invariance under isometric transformations \eqref{eq:gendivIsoInv}, we have $ \mathbf{D}(\V(\rho)|| \tilde{\rho}) = \mathbf{D}(\V(\rho)|| \V(\hat{\rho})) = \mathbf{D}(\rho|| \hat{\rho}) \leq \varepsilon \implies \tilde{\rho} \in B^\varepsilon(\V(\rho))$.
    Now let $\tilde{\rho} \in B^\varepsilon(\V(\rho)) \implies \mathbf{D}(\V(\rho)||\tilde{\rho}) \leq \varepsilon$. Using the data-processing inequality for any quantum channel, we have $\mathbf{D}(\RV(V(\rho))|| \RV(\tilde{\rho})) \leq \varepsilon \implies \mathbf{D}(\rho|| \RV(\tilde{\rho})) \leq \varepsilon \implies \tilde{\rho} \in B^{\varepsilon}_{\RV}(\rho)$. 
\end{proof}
\noindent A second technical property of the reversal channels is used for proving invariance under local transformations. There exist reversal channels for local transformations that preserve the reversal property $\RV \circ \V = \mathrm{id}$ locally, in the sense $\RV(\V_A(\cdot) \otimes (\cdot)) \sim \mathrm{id}_A \otimes (\cdot)$. \\
For mutual information quantities of types $\mathbf{I}$ using quantum states as arguments of the divergence, the following fact is used for proving Proposition \ref{thm:typesLocIsoInv}.
\begin{lemma} \label{thm:revChannelLemma2} \ \\
    Let $V = V_A \otimes V_B$ be a local isometry. For any state $\rho_A$ and any local quantum states $\sigma_A,~ \sigma_B$, there exists a reversal channel $\RV$ satisfying
    \begin{flalign}
        \RV(\V_A(\rho_A) \otimes \sigma_B) = \rho_A \otimes \Tr_A[\RV(\V_A(\sigma_A) \otimes \sigma_B)].
    \end{flalign}
\end{lemma}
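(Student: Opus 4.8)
The plan is to compute $\RV(\V_A(\rho_A)\otimes\sigma_B)$ explicitly from the definition of the reversal channel and then exploit the freedom in the choice of the auxiliary state $\omega$. Writing $V = V_A\otimes V_B$, the adjoint channel factorizes as $\V^\dagger(\cdot) = (V_A^\dagger\otimes V_B^\dagger)(\cdot)(V_A\otimes V_B)$, and the isometry property $V_A^\dagger V_A = \id_A$ collapses the first factor, so I would obtain $\V^\dagger(\V_A(\rho_A)\otimes\sigma_B) = \rho_A\otimes\V_B^\dagger(\sigma_B)$. The point is that the $A$-state passes through the adjoint channel unchanged, while only the $B$-isometry survives nontrivially.

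Next I would evaluate the scalar weight $c := \Tr[(\id - VV^\dagger)(\V_A(\rho_A)\otimes\sigma_B)]$ appearing in front of $\omega$. Since $VV^\dagger = V_AV_A^\dagger\otimes V_BV_B^\dagger$ and $\Tr[\V_A(\rho_A)] = \Tr[\rho_A]=1$, the $A$-factor contributes only its trace and the weight reduces to $c = \Tr[(\id_{\tilde B} - V_BV_B^\dagger)\sigma_B]$, which depends on $\sigma_B$ alone and is independent of the $A$-state fed in. Combining the two computations gives $\RV(\V_A(\rho_A)\otimes\sigma_B) = \rho_A\otimes\V_B^\dagger(\sigma_B) + c\,\omega$, and the identical calculation with $\sigma_A$ in place of $\rho_A$ produces the very same weight $c$.

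The key step — and the place where the non-uniqueness of the reversal channel is used — is the choice $\omega := \rho_A\otimes\tau_B$ for an arbitrary fixed state $\tau_B\in\denop(\hs_B)$. This is a legitimate density operator, so the resulting $\RV$ is a valid CPTP reversal channel, and with this choice the correction term factorizes as $c\,\omega = \rho_A\otimes(c\,\tau_B)$. Hence $\RV(\V_A(\rho_A)\otimes\sigma_B) = \rho_A\otimes\big(\V_B^\dagger(\sigma_B)+c\,\tau_B\big)$. Repeating the computation for $\sigma_A$ and tracing out the first system, the trace-one $A$-factors disappear and I would get $\Tr_A[\RV(\V_A(\sigma_A)\otimes\sigma_B)] = \V_B^\dagger(\sigma_B)+c\,\tau_B$, which is precisely the $B$-factor above; putting the pieces together yields the claimed identity.

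I expect the main obstacle to be conceptual rather than computational: recognizing that the additive correction $c\,\omega$ is the only obstruction to a clean tensor factorization, and that it can be neutralized by tying the auxiliary state's $A$-marginal to $\rho_A$. The role of $\sigma_A$ on the right-hand side is then merely to supply a normalized $A$-factor that is discarded by $\Tr_A$, which is what makes the weight $c$ and the resulting $B$-factor agree on both sides; no property of $\sigma_A$ beyond normalization is required.
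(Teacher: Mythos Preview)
Your proposal is correct and follows essentially the same route as the paper: choose the auxiliary state $\omega = \rho_A \otimes \tau_B$ (the paper writes $\omega_B$ for your $\tau_B$), use $V_A^\dagger V_A = \id_A$ to collapse the $A$-factor in both the adjoint part and the scalar weight, and observe that the resulting $B$-factor $\V_B^\dagger(\sigma_B) + c\,\tau_B$ is independent of which normalized $A$-state was fed in. Your write-up is more explicit about why the weight $c$ depends only on $\sigma_B$ and about the role of $\sigma_A$, but the argument is the same.
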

\begin{proof}
    Define the reversal channel with $\omega = \rho_A \otimes \omega_B$ for any local quantum state $\omega_B$. Using the definition of $\RV$ and $V_A^\dagger V_A = \mathbbm{1}$, one can directly calculate:
    \begin{flalign}
        \RV(\V_A(\rho_A) \otimes \sigma_B) = \rho_A \otimes (V_B^\dagger \sigma_B V_B + (1-\Tr_B[V_B V_B^\dagger \sigma_B])\omega_B) = \rho_A \otimes \Tr_A[\RV(\V_A(\sigma_A)\otimes\sigma_B)].
    \end{flalign}
\end{proof}
\noindent For conditional entropies of types $\mathbf{H}$ using both quantum states and positive-semidefinite operators as arguments of the divergence, the following Lemma is used for proving invariance in the proof of Proposition \ref{thm:typesLocUniIsoInv}.
\begin{lemma} \label{thm:revChannelLemma22} \ \\
    Let $V = U_A \otimes V_B$ with unitary $U_A$ and isometric $V_B$. For any local quantum state $\sigma_B$, there exists a reversal channel $\RV$ satisfying
    \begin{flalign}
        \RV(\id_A \otimes \sigma_B) = \id_A \otimes \Tr_A[\RV(\pi_A \otimes \sigma_B)],
    \end{flalign}
    where $\pi_A = \frac{1}{d_A} \id_A$ is the maximally mixed state of $A$ with dimension $d_A$.
\end{lemma}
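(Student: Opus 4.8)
The plan is to proceed exactly as in the proof of Lemma~\ref{thm:revChannelLemma2}, namely to exhibit an explicit admissible choice of the defining state $\omega$ for the reversal channel and then verify the claimed identity by direct computation. The guiding idea is that the reversal channel has the freedom to pick $\omega$, and the correct choice here is $\omega = \pi_A \otimes \omega_B$ for an arbitrary local state $\omega_B$, i.e.\ maximally mixed on the first system. The reason this particular normalization on $A$ is forced becomes clear once one tracks where factors of $d_A$ appear.

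The key structural observation, and the step that separates this lemma from the purely isometric case, is that since $U_A$ is \emph{unitary} we have $VV^\dagger = U_A U_A^\dagger \otimes V_B V_B^\dagger = \id_A \otimes V_B V_B^\dagger$, so that $\id - VV^\dagger = \id_A \otimes (\id_B - V_B V_B^\dagger)$ acts as the identity on system $A$. Consequently the ``correction'' term of the reversal channel, $\Tr[(\id - VV^\dagger)\,\sigma]\,\omega$, always carries an $A$-part proportional to $\id_A$ (through our choice of $\omega$), which is precisely what is needed to reproduce the factor $\id_A$ appearing on both sides of the asserted equation. I expect this to be the main conceptual point: were $A$ transformed by a genuine (non-surjective) isometry, $\id-VV^\dagger$ would have nontrivial support on $A$ and the identity operator on the left-hand side could not be regenerated, which is exactly why the stronger unitarity hypothesis on $A$ is imposed.

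Concretely, I would first compute the left-hand side. Using $\V^\dagger(\id_A \otimes \sigma_B) = U_A^\dagger \id_A U_A \otimes V_B^\dagger \sigma_B V_B = \id_A \otimes V_B^\dagger \sigma_B V_B$ together with $\Tr[(\id - VV^\dagger)(\id_A \otimes \sigma_B)] = \Tr_A[\id_A]\,\Tr_B[(\id_B - V_B V_B^\dagger)\sigma_B] = d_A\,\Tr_B[(\id_B - V_B V_B^\dagger)\sigma_B]$, the factor $d_A$ cancels the $1/d_A$ in $\omega = \pi_A \otimes \omega_B$, yielding
\begin{flalign}
    \RV(\id_A \otimes \sigma_B) = \id_A \otimes \tau_B, \qquad \tau_B := V_B^\dagger \sigma_B V_B + \Tr_B[(\id_B - V_B V_B^\dagger)\sigma_B]\,\omega_B. \nonumber
\end{flalign}

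Finally I would run the identical computation on $\pi_A \otimes \sigma_B$; the only change is $\Tr_A[\pi_A] = 1$ in place of $\Tr_A[\id_A] = d_A$, which gives $\RV(\pi_A \otimes \sigma_B) = \pi_A \otimes \tau_B$ with the \emph{same} $\tau_B$. Taking the partial trace over $A$ then produces $\Tr_A[\RV(\pi_A \otimes \sigma_B)] = \Tr_A[\pi_A]\,\tau_B = \tau_B$, so that $\id_A \otimes \Tr_A[\RV(\pi_A \otimes \sigma_B)] = \id_A \otimes \tau_B$, matching the left-hand side and completing the argument. Beyond bookkeeping of these $d_A$ factors the calculation is routine; the only thing requiring care is to confirm that $\omega = \pi_A \otimes \omega_B$ is a legitimate density operator so that $\RV$ is genuinely trace-preserving, which is immediate.
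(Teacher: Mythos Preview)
Your proposal is correct and follows exactly the paper's approach: choose $\omega = \pi_A \otimes \omega_B$ in the reversal channel, use $U_A U_A^\dagger = \id_A$ so that the correction term factorizes through $\id_A$, and verify the identity by direct computation. You spell out the $d_A$ bookkeeping and the separate evaluation of both sides more explicitly than the paper, which simply states the resulting equality in one line, but the argument is the same.
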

\begin{proof}
    Define the reversal channel with $\omega = \pi_A \otimes \omega_B$ for any local quantum state $\omega_B$. Considering $U_A^\dagger U_A = U_AU_A^\dagger = \id_A$, using the definition of $\RV$ and calculating as in the proof of Lemma~ \ref{thm:revChannelLemma2}, one finds:
    \begin{flalign}
        \RV(\id_A\otimes \sigma_B) = \id_A \otimes (V_B^\dagger \sigma_B V_B + (1-\Tr_B[V_B V_B^\dagger \sigma_B])\omega_B) = \id_A \otimes \Tr_A[\RV(\pi_A\otimes\sigma_B)].
    \end{flalign}
\end{proof}
\noindent In the special case of smoothed states present in both arguments of the divergence as in $\mathbf{I}^{\varepsilon}_4$, the following property is used for proving  Proposition \ref{thm:typesLocUniIsoInv}.
\begin{lemma} \label{thm:revChannelLemma3} \ \\
    Let $V = U_A \otimes V_B$ with unitary $U_A$ and isometric $V_B$. For any state $\rho_A$ and any quantum state $\sigma_{AB}$, there exists a reversal channel $\RV$ satisfying
    \begin{flalign}
        \RV(\V_A(\rho_A) \otimes \sigma_B) = \rho_A \otimes \Tr_A[\RV(\sigma_{AB})].
    \end{flalign}
\end{lemma}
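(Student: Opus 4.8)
The plan is to follow the template of the proofs of Lemmas~\ref{thm:revChannelLemma2} and \ref{thm:revChannelLemma22}: fix the reversal channel whose defining state factorizes as $\omega = \rho_A \otimes \omega_B$ for an arbitrary local state $\omega_B$, and then evaluate both sides of the claimed identity by direct substitution into the definition of $\RV$. Two structural facts do all the work: because $V = U_A \otimes V_B$, we have $\V^\dagger(\cdot) = (U_A^\dagger \otimes V_B^\dagger)(\cdot)(U_A \otimes V_B)$ and, using unitarity of $U_A$, $\id - VV^\dagger = \id_A \otimes (\id_B - V_B V_B^\dagger)$.

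First I would evaluate the left-hand side on $X = \V_A(\rho_A) \otimes \sigma_B = U_A \rho_A U_A^\dagger \otimes \sigma_B$. The adjoint term collapses via $U_A^\dagger U_A = \id_A$ to $\rho_A \otimes V_B^\dagger \sigma_B V_B$, while the scalar correction reduces to $1 - \Tr_B[V_B V_B^\dagger \sigma_B]$ (using $\Tr[\rho_A] = 1$ and that $VV^\dagger$ acts trivially on $A$). With $\omega = \rho_A \otimes \omega_B$ this gives
\begin{flalign}
    \RV(\V_A(\rho_A) \otimes \sigma_B) = \rho_A \otimes \big(V_B^\dagger \sigma_B V_B + (1 - \Tr_B[V_B V_B^\dagger \sigma_B])\,\omega_B\big),
\end{flalign}
the same expression appearing in the two preceding lemmas.

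The substantive step is the right-hand side, where $\RV$ now acts on the possibly correlated state $\sigma_{AB}$ rather than on a product. Here I would compute $\Tr_A[\V^\dagger(\sigma_{AB})]$ by pulling the $B$-side factors $V_B^\dagger$ and $V_B$ outside the partial trace over $A$ and then invoking cyclicity of $\Tr_A$ with respect to operators supported on $A$ to bring $U_A$ next to $U_A^\dagger$. This yields $\Tr_A[\V^\dagger(\sigma_{AB})] = V_B^\dagger\,\Tr_A[(U_A U_A^\dagger \otimes \id_B)\sigma_{AB}]\,V_B = V_B^\dagger \sigma_B V_B$. This is the one place where the unitarity of $U_A$ is indispensable and a mere isometry would fail: only $U_A U_A^\dagger = \id_A$ lets the partial trace discard the $A$-transformation and reduce the correlated $\sigma_{AB}$ to its marginal $\sigma_B$. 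The scalar correction again equals $1 - \Tr_B[V_B V_B^\dagger \sigma_B]$ (now using $\Tr[\sigma_{AB}] = 1$) and $\Tr_A[\omega] = \omega_B$, so that $\rho_A \otimes \Tr_A[\RV(\sigma_{AB})]$ reproduces the left-hand side term by term.

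I expect the main obstacle to be bookkeeping rather than conceptual: carefully tracking how $\Tr_A$ commutes past the residual $V_B$-operators on $B$, and making explicit that the correlations of $\sigma_{AB}$ are annihilated by $\Tr_A$ precisely because $U_A$ is unitary. Once that reduction is in place, both sides coincide by inspection.
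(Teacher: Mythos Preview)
Your proposal is correct and follows essentially the same approach as the paper: fix $\omega = \rho_A \otimes \omega_B$, expand $\RV$ on both $\V_A(\rho_A)\otimes\sigma_B$ and $\sigma_{AB}$, and use $U_A^\dagger U_A = \id_A$ for the left-hand side and $U_A U_A^\dagger = \id_A$ (via cyclicity of the partial trace over $A$) for the right-hand side to reduce each to $\rho_A \otimes \big(V_B^\dagger \sigma_B V_B + (1 - \Tr_B[V_B V_B^\dagger \sigma_B])\,\omega_B\big)$. Your explicit isolation of the cyclicity step is, if anything, slightly more careful than the paper's presentation, but the argument is the same.
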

\begin{proof}
    Set $\omega = \rho_A \otimes \omega_B$ for any quantum state $\omega_B$. Using the definition of $\RV$ and $U_A^\dagger U_A = \mathbbm{1}$, we then have for any local isometry $V = U_A \otimes V_B$:
    \begin{flalign}
        \RV(\V_A(\rho_A)\otimes \sigma_B) &= \rho_A \otimes V_B^\dagger \sigma_B V_B + \Tr[\mathbbm{1}_{AB} - \V_A(\rho_A) \otimes V_B V_B^\dagger \sigma_B]\rho_A \otimes \omega_B \\
        &= \rho_A \otimes (V_B^\dagger \sigma_B V_B + (1 - \Tr_B[V_B V_B^\dagger \sigma_B]\omega_B).
    \end{flalign}
On the other hand, using $U_A U_A^\dagger = \mathbbm{1}_A$, one finds
\begin{flalign}
    \rho_A \otimes \Tr_A[\RV(\sigma_{AB})] &= \rho_A \otimes \Tr_A[
        U_A^\dagger \otimes V_B^\dagger ~\sigma_{AB}~ U_A \otimes V_B 
        + \Tr[(\mathbbm{1}_{AB} - U_A U_A^\dagger \otimes V_B V_B^\dagger) ~\sigma_{AB}
    ] \rho_A \otimes \omega_B] \\
     &= \rho_A \otimes \Tr_A[
        \mathbbm{1}_A \otimes V_B^\dagger ~\sigma_{AB}~ \mathbbm{1}_A \otimes V_B 
        + (1-\Tr[\mathbbm{1}_A \otimes V_B V_B^\dagger ~\sigma_{AB}]) \rho_A \otimes \omega_B        
    ] \\
    &= \rho_A \otimes (V_B^\dagger \sigma_B V_B + (1- \Tr_B[V_BV_B^\dagger~ \sigma_B]) \omega_B),
\end{flalign}
showing the claimed equality.
\end{proof} \noindent
Using these results, we now prove Propositions \ref{thm:typesLocIsoInv} and \ref{thm:typesLocUniIsoInv}.
\begin{proof}[Proof of Proposition \ref{thm:typesLocIsoInv}]
    Let $\V = \V_A \otimes \V_B$, where $\V_{(A/B)}: \denop(\hs_{(A/B)}) \rightarrow \denop(\tilde{\hs}_{(A/B)})$. Let $im(\V_{(A/B)}) \subset \denop(\tilde{\hs}_{(A/B)})$ denote the image of $\V_{(A/B)}$. \\
    First consider type $\mathbf{I}_1$. Using $\Tr_{(A/B)}[\V_A \otimes \V_B(\rho_{AB})] = \V_{(B/A)}(\rho_{B/A})$ and the invariance of any divergence under isometric evolution \eqref{eq:gendivIsoInv}, one has:
    \begin{flalign}
        \mathbf{I}_1(\V(\rho_{AB})) &= \mathbf{D}(\V(\rho_{AB}) || \Tr_B[\V(\rho_{AB})] \otimes \Tr_A[V(\rho_{AB})]) \\
        &= \mathbf{D}(\V(\rho_{AB}) || \V(\rho_A \otimes \rho_B)) \\
        &= \mathbf{D}(\rho_{AB} || \rho_A \otimes \rho_B) \\
        &= \mathbf{I}_1(\rho_{AB}).
    \end{flalign}
    Next, consider type $\mathbf{I}_2$. Again using the invariance under isometric transformations, one has:
    \begin{flalign}
        \label{eq:ineq51}
        \mathbf{I}_2(\rho_{AB}) &= \inf_{\sigma_B \in \denop ( \hs_B)} \mathbf{D} (\rho_{AB} || \rho_A \otimes \sigma_B) \\
        &= \inf_{\sigma_B \in \denop ( \hs_B)} \mathbf{D} (\V(\rho_{AB}) ||\V_A(\rho_A) \otimes \V_B(\sigma_B)) \\
        &= \inf_{\tilde{\sigma}_B \in im(\V_B)} \mathbf{D} (\V(\rho_{AB}) ||\V_A(\rho_A) \otimes \tilde{\sigma}_B) \\
        &\geq \inf_{\tilde{\sigma}_B \in \denop(\tilde{\hs}_B)} \mathbf{D} (\V(\rho_{AB}) ||\V_A(\rho_A) \otimes \tilde{\sigma}_B) \\
        &= \mathbf{I}_2(\V(\rho_{AB}))
    \end{flalign}
    Conversely, using the data processing inequality for $\mathbf{D}$ with the reversal channel $\RV$, one has:
    \begin{flalign}
        \mathbf{I}_2(\V(\rho_{AB})) &= \inf_{\tilde{\sigma}_B \in \denop(\tilde{\hs}_B)} \mathbf{D}(\V(\rho_{AB}) || \Tr_B[\V(\rho_{AB})] \otimes \tilde{\sigma}_B) \\
        &= \inf_{\tilde{\sigma}_B \in \denop(\tilde{\hs}_B)} \mathbf{D}(\V(\rho_{AB}) || \V_A(\rho_A) \otimes \tilde{\sigma}_B) \\
        &\geq \inf_{\tilde{\sigma}_B \in \denop(\tilde{\hs}_B)} \mathbf{D}(\rho_{AB} || \RV(\V_A(\rho_A) \otimes \tilde{\sigma}_B))
    \end{flalign}
        Using the reversal channel as in Lemma \ref{thm:revChannelLemma2}, with arbitrary $\hat{\sigma}_A \in \denop ( \hs_A)$ yields therefore:
    \begin{flalign}
        \label{eq:ineq52}
        \mathbf{I}_2(\V(\rho_{AB})) &\geq \inf_{\tilde{\sigma}_B \in \denop(\tilde{\hs}_B)} \mathbf{D}(\rho_{AB} || \rho_A \otimes \Tr_A [\RV(\V_A(\hat{\sigma}_A) \otimes \tilde{\sigma}_B)])\\
        &\geq \inf_{\sigma_B \in \denop ( \hs_B)} \mathbf{D}(\rho_{AB} || \rho_A \otimes \sigma_B) \\
        &= \mathbf{I}_2(\rho_{AB}).
    \end{flalign}
    Note, that the second inequality holds because $\Tr_A [\RV(\V_A(\hat{\sigma}_A) \otimes \tilde{\sigma}_B)] \in \denop(\hs_B)$.
    The inequalities below \eqref{eq:ineq51} and  \eqref{eq:ineq52} together imply the claimed equality.\\
    Finally, consider type $\mathbf{I}^{\varepsilon}_3$. Using similar arguments and Lemma \ref{thm:revChannelLemma1} one finds:
    \begin{flalign}
        \mathbf{I}^{\varepsilon}_3(\rho_{AB}) &= \inf_{\substack{\hat{\rho}_{AB} \in B^{\varepsilon}(\rho_{AB}) \\\sigma_B \in \denop(\hs_B)}}
        \mathbf{D}(\hat{\rho}_{AB}||\rho_A \otimes \sigma_B)  \\ 
        &= \inf_{\substack{\hat{\rho}_{AB} \in B^{\varepsilon}_{\V}(\rho_{AB}) \\ \tilde{\sigma}_B \in im(\V_B)}}
        \mathbf{D}(\hat{\rho}_{AB}||\V_A(\rho_A) \otimes \tilde{\sigma}_B) \\
        &\geq \inf_{\substack{\tilde{\rho}_{AB} \in B^{\varepsilon}(\V(\rho_{AB})) \\ \tilde{\sigma}_B \in \denop(\tilde{\hs}_B)}}
        \mathbf{D}(\tilde{\rho}_{AB}||\V_A(\rho_A) \otimes \tilde{\sigma}_B) \\
        &= \mathbf{I}^{\varepsilon}_3(\V(\rho_{AB}))
    \end{flalign}
    Conversely, using again the reversal channel as in Lemma \ref{thm:revChannelLemma2} with arbitrary $\hat{\sigma}_A$ together with Lemma \ref{thm:revChannelLemma1} and noting the set equality $\RV[B^\varepsilon_{\RV}(\rho_{AB})] = B^\varepsilon(\rho_{AB})$, one concludes:
    \begin{flalign}
        \mathbf{I}^{\varepsilon}_3(\V(\rho_{AB})) &= \inf_{\substack{\tilde{\rho}_{AB} \in B^{\varepsilon}(\V(\rho_{AB})) \\ \tilde{\sigma}_B \in \denop(\tilde{\hs}_B)}}
        \mathbf{D}(\tilde{\rho}_{AB}||\V_A(\rho_A) \otimes \tilde{\sigma}_B) \\ 
        &\geq \inf_{\substack{\tilde{\rho}_{AB} \in B^{\varepsilon}_{\RV}(\V(\rho_{AB})) \\ \tilde{\sigma}_B \in \denop(\tilde{\hs}_B)}}
        \mathbf{D}(\tilde{\rho}_{AB}||\V_A(\rho_A) \otimes \tilde{\sigma}_B) \\
        &\geq \inf_{\substack{\tilde{\rho}_{AB} \in B^{\varepsilon}_{\RV}(\V(\rho_{AB})) \\ \tilde{\sigma}_B \in \denop(\tilde{\hs}_B)}}
        \mathbf{D}(\RV(\tilde{\rho}_{AB})||\RV(\V_A(\rho_A) \otimes \tilde{\sigma}_B)) \\
        &= \inf_{\substack{\hat{\rho}_{AB} \in B^{\varepsilon}(\rho_{AB}) \\ \tilde{\sigma}_B \in \denop(\tilde{\hs}_B)}}
        \mathbf{D}(\hat{\rho}_{AB}||\rho_A \otimes \Tr_A [\RV(\V_A(\hat{\sigma}_A) \otimes \tilde{\sigma}_B)]) \\
        &\geq \inf_{\substack{\hat{\rho}_{AB} \in B^{\varepsilon}(\rho_{AB}) \\ \sigma_B \in \denop(\hs_B)}}
        \mathbf{D}(\hat{\rho}_{AB}||\rho_A \otimes \sigma_B) \\
        &= \mathbf{I}^{\varepsilon}_3(\rho_{AB})
    \end{flalign}
\end{proof}

\begin{proof}[Proof of Proposition \ref{thm:typesLocUniIsoInv}] 
    Let $\V = \V_A \otimes \V_B,~\V_A(\cdot) = U_A~(\cdot)~U_A^\dagger$, $\V_A:  \denop(\hs_A) \mapsto \denop(\hs_A)$ with unitary $U_A$ and $\V_B: \denop(\hs_B) \mapsto \denop(\tilde{\hs}_B)$ with isometric $\V_B$.
    Let $im(\V_B) \subset \denop(\tilde{\hs}_B)$ denote the image of $\V_B$. \\
    First, consider the type $\mathbf{I}^{\varepsilon}_4$. Using the invariance of $\mathbf{D}$ under isometric transformations, and Lemma \ref{thm:revChannelLemma1}, one has:
    \begin{flalign}
        \mathbf{I}^{\varepsilon}_4(\rho_{AB}) &= \inf_{\hat{\rho}_{AB} \in B^\varepsilon(\rho_{AB})} \mathbf{D} (\hat{\rho}_{AB} || \rho_A \otimes \hat{\rho}_B) \\
        &= \inf_{\hat{\rho}_{AB} \in B^\varepsilon(\rho_{AB})} \mathbf{D} (\V(\hat{\rho}_{AB}) || \V(\rho_A \otimes \hat{\rho}_B)) \\
        &= \inf_{\tilde{\rho}_{AB} \in B_\V^\varepsilon(\rho_{AB})} \mathbf{D} (\tilde{\rho}_{AB} || \V_A(\rho_A) \otimes \tilde{\rho}_B) \\
        &\geq \inf_{\tilde{\rho}_{AB} \in B^\varepsilon(\V(\rho_{AB}))} \mathbf{D} (\tilde{\rho}_{AB} || \V_A(\rho_A) \otimes \tilde{\rho}_B) \\
        &= \mathbf{I}^{\varepsilon}_4(\V(\rho_{AB})).
    \end{flalign}
    Conversely, using again Lemma \ref{thm:revChannelLemma1} and applying the data processing inequality for the reversal channel $\RV$ as in Lemma \ref{thm:revChannelLemma3}, one finds:
    \begin{flalign}
        \mathbf{I}^{\varepsilon}_4(\V(\rho_{AB})) &= \inf_{\tilde{\rho}_{AB} \in B^\varepsilon(\V(\rho_{AB}))} \mathbf{D} (\tilde{\rho}_{AB} || \V_A(\rho_A) \otimes \tilde{\rho}_B) \\
        &\geq \inf_{\tilde{\rho}_{AB} \in B_{\RV}^\varepsilon(\rho_{AB})} \mathbf{D} (\tilde{\rho}_{AB} || \V_A(\rho_A) \otimes \tilde{\rho}_B) \\
        &\geq  \inf_{\tilde{\rho}_{AB} \in B_{\RV}^\varepsilon(\rho_{AB})} \mathbf{D} (\RV(\tilde{\rho}_{AB}) || \RV(\V_A(\rho_A) \otimes \tilde{\rho}_B)).
    \end{flalign}
    Now since $U_A$ is unitary, Lemma \ref{thm:revChannelLemma3} implies together with the set equality $\RV[B^\varepsilon_{\RV}(\rho_{AB})] = B^\varepsilon(\rho_{AB})$,
    \begin{flalign}
        \mathbf{I}^{\varepsilon}_4(\V(\rho_{AB}))  &\geq \inf_{\tilde{\rho}_{AB} \in B_{\RV}^\varepsilon(\rho_{AB})} \mathbf{D} (\RV(\tilde{\rho}_{AB}) || \rho_A \otimes \Tr_A[\RV(\tilde{\rho}_{AB})]) \\
        &= \inf_{\hat{\rho}_{AB} \in B^\varepsilon(\rho_{AB})} \mathbf{D}(\hat{\rho}_{AB} || \rho_A \otimes \hat{\rho}_B) \\
        &= \mathbf{I}^{\varepsilon}_4(\rho_{AB}),
    \end{flalign}
    proving the claimed property.\\
    Finally, consider the conditional entropy types $\mathbf{H}_1, \mathbf{H}_2$ and $\mathbf{H}^{\varepsilon}_3$. Noting $\V_A(\id_A) = \id_A$ due to the fact that $\V_A$ is unitary, the proofs of local invariance are equivalent to those for $\mathbf{I}_1, \mathbf{I}_2$ and $\mathbf{I}^{\varepsilon}_3$ given in the proof of Proposition \ref{thm:typesLocIsoInv}, respectively, if $\rho_A$ is replaced by $\id_A$ and Lemma \ref{thm:revChannelLemma22} is used instead of Lemma \ref{thm:revChannelLemma2}.
\end{proof} 
\section{Conclusion}
\label{sec:conclusion}
In this work we defined types of quantum information quantities based on generalized divergences and analyzed their properties regarding local isometric and unitary transformations. Leveraging properties of the reversal channel associated with the local transformation, we proved that several types of information quantities imply invariance under any local isometric transformation, while others are invariant if one of the subsystems is transformed by a unitary. 
Two main technical results are utilized to prove the local invariance. First, the smoothing environment required by some types of information quantities is related to its image and its pre-image under the isometric channel and a corresponding reversal channel, implying the set relation of Lemma \ref{thm:revChannelLemma1}. Second, the action of the reversal channel for a local isometric channel is characterized in Lemmas \ref{thm:revChannelLemma2}-\ref{thm:revChannelLemma3}. Combining these insights with the data processing inequality of any generalized divergence allows us to derive the main results regarding invariance for several types of information quantities under local isometric or unitary transformation as presented in Proposition \ref{thm:typesLocIsoInv} and \ref{thm:typesLocUniIsoInv}.
\\
Since methods for proving the invariance do not depend on the specific form of the generalized divergence used to define the information quantity, the invariances hold for numerous key quantities with operational relevance in quantum information processing.
These invariances can enable more efficient computation of quantum information quantities by reducing complex states to simpler, equivalent forms with symmetries or smaller dimensions. For instance, consider the setting of quantum cryptography including a third party holding the purifying system of a bipartite state. If the output of a tripartite quantum channel is a pure state, it is equivalent to the purified output of a corresponding bipartite channel up to a local isometry in the purifying system. The dimension of the purification system of the latter state, however, may be smaller than the dimension of the former purification system. Consequently, the evaluation of relevant information quantities like the private information may be significantly improved by calculating it for the smaller but equivalent output state. Such invariances reduce computational overhead and unlock flexibility in protocol design. Given that the protocol performance is measured by an invariant information quantity, local unitary or isometric operations like encoding or correction operations can be added, modified or removed while preserving performance metrics.
\\
Note, that the general proofs for local invariance depend only on the monotonicity by the data-processing inequality and the reversal properties of the isometric channel as given in Lemma \ref{thm:revChannelLemma2}--\ref{thm:revChannelLemma3}. Therefore, local invariance may be shown similarly for other channels and corresponding reversals, such as the Petz recovery map (cf.~\cite{khatri_principles_2024}).
\\
The computation of generalized information quantities remains a significant challenge, in general. While some quantities like the hypothesis testing mutual information can be calculated efficiently, e.g., by semidefinite programs \cite{dupuis_generalized_2012,datta_second-order_2016}, others can only be approximated (see Ref.~\cite{nuradha_fidelity-based_2024} regarding the smooth max-relative entropy) or lack a general and efficient solution like the smooth max-mutual information. Our results may enable new methods to bound or approximate these quantities by leveraging state equivalences. 
Finally, we note that the invariance under local isometries, notably local Clifford and unitary operations, plays a pivotal role in the classification and optimization of graph states as used in error-correction \cite{beny_general_2010, sarkar_graph-based_2024} or measurement-based quantum computation \cite{hein_multiparty_2004, briegel_measurement-based_2009}, allowing for the optimization of codes or resource-states with respect to an invariant information measure.
\\
In conclusion, the results of this contribution allow us to derive the behavior of general types of information quantities under local isometric or unitary transformations. Therefore, they improve the capability to characterize and compute information quantities relevant throughout the vast field of quantum information processing.

\printbibliography

\section*{Author Contributions} Conceptualization, C.P.; validation, C.P., T.C.S. and  B.C.H.; formal analysis, C.P.; writing---original draft preparation, C.P.; writing---review and editing, C.P., T.C.S. and B.C.H.   All authors have read and agreed to the published version of the manuscript.

\section*{Acknowledgments} This research was funded in whole, or in part, by the Austrian Science Fund (FWF) [10.55776/P36102]. For the purpose of open access, the author has applied a CC BY public copyright license to any Author Accepted Manuscript version arising from this submission.

\newpage

\end{document}